\documentclass[article, ij4uq]{ij4uq}

\usepackage[hang]{footmisc}
\fancypagestyle{plain}{%
  \fancyhf{}
  \fancyhead[R]{
  }
  \fancyfoot[R]{\small\bf\thepage }
  \fancyfoot[L]{
  }
  }

\renewcommand{\dmyy}{}
\renewcommand{\myyear}{2026}
\renewcommand{\today}{}

\usepackage{mathtools}
\usepackage{siunitx}
\usepackage{url}

\usepackage{hyperref}
\usepackage{cleveref}
\crefname{exam}{example}{examples}
\Crefname{exam}{Example}{Examples}
\crefname{enumi}{step}{steps}
\Crefname{enumi}{Step}{Steps}
\crefname{equation}{}{}

\theoremstyle{remark}
\newtheorem*{continuancex}{Example \continuanceref  { }Continued}

\makeatletter
\newenvironment{continuance}[1]{%
  \newcommand\continuanceref{\ref{#1}}%
  \continuancex
}{\endcontinuancex}

\hypersetup{
  pdftitle={Non-Parametric Model Calibration with  Stochastic Control Parameters},
  pdfauthor={Akshay Prasadan, Samopriya Basu, Faezeh Yazdi, Derek Bingham, Donald Estep},
  pdfkeywords={Model calibration, Inverse problems, Bayesian inference, Uncertainty quantification, Disintegration of Measure},
  colorlinks=true,
  linkcolor={blue},
  citecolor={blue},
  urlcolor={blue},
  pdfcreator={LaTeX via pandoc}}

\DeclareMathOperator{\KL}{KL}
\DeclareMathOperator{\Unif}{Unif}
\DeclareMathOperator{\Beta}{Beta}
\DeclareMathOperator*{\argmin}{arg\,min}

\newcommand{\TGDjoint}{P_{\Lambda \times U}^{\mathrm{t}}}
\newcommand{\TGDmarg}{P_{\Lambda}^{\mathrm{t}}}
\newcommand{\TGDU}{P_U^{\mathrm{t}}}
\newcommand{\SCPjoint}{\hat P_{\Lambda\times U}}
\newcommand{\SCPmarg}{\hat P_{\Lambda}}

\newcommand{\cB}{\mathcal{B}}
\newcommand{\cD}{\mathcal{D}}
\newcommand{\cN}{\mathcal{N}}
\newcommand{\cP}{\mathcal{P}}
\newcommand{\RR}{\mathbb{R}}

\begin{document}

\volume{
Volume x, Issue x, \myyear\today
}

\title{Non-Parametric Model Calibration with Stochastic Control Parameters}
\titlehead{Calibration with Stochastic Control Parameters}

\authorhead{A. Prasadan, S. Basu, F. Yazdi, D. Bingham \& D. Estep}

\corrauthor[1]{Akshay Prasadan}
\author[2]{Samopriya Basu}
\author[1]{Faezeh Yazdi}
\author[1]{Derek Bingham}
\author[1]{Donald Estep}
\corremail{akshay\_prasadan@sfu.ca}
\corraddress{Department of Statistics and Actuarial Science, 
Simon Fraser University, Burnaby, BC}
\address[1]{Department of Statistics and Actuarial Science, 
Simon Fraser University, Burnaby, BC}
\address[2]{Carleton University, Ottawa, ON}


\abstract{
We present a method for calibrating a computer model using non-parametric  techniques where the inputs are stochastic but include calibration parameters whose distributions are unknown and control parameters whose distributions are specified.  Our solution gives a distributional estimate over the input space that is consistent with observed field data, while also preserving the distribution of the known marginal of the control parameters. This property is desirable since stochastic inputs often include physical processes affecting the experimental conditions, and a scientifically plausible calibration estimate should preserve well-established distributional properties of these inputs. The method builds on recently developed non-parametric computer model calibration techniques based on the disintegration of measure and Bayesian inference.}

\keywords{Model calibration, Inverse problems, Bayesian inference, Uncertainty quantification, Disintegration of Measure}

\maketitle

\section{Introduction}\label{section:intro}

Representing the behavior of a physical system with a computer model, e.g., the numerical solution of a partial differential equation (PDE), is common in many areas of science and engineering \cite{fang2005design}. The computer model often has two types of input: (i) those that can be measured or adjusted in the physical system, and (ii) those that are unknown to the experimenter and must be calibrated using information from field (experimental) data \cite{kennedy_ohagan}. We refer to the first type of input as control parameters\footnote{Note that the term `control' does not imply that there is latitude in its specification, but simply that it is measurable in some fashion.} and the second type as calibration parameters. 

This paper develops a new statistical methodology for computer model calibration in such situations. In our setting, both types of input are  stochastic. The distinction in the input parameters is that the calibration parameters have an unknown distribution whereas the  control parameters have a known distribution. The inferential aim is to estimate the distribution of calibration parameters given field data, the computer model, and the known distribution of control parameters.

\begin{exam}[Heat Transfer in Metal Plates] \label{example:heat:equation:intro}
The following running example illustrates a  concrete experimental setting where such problems arise. Suppose a materials engineer is interested in the thermal properties of thin, square metal plates from a supplier. These include convection and diffusion parameters which characterize how heat propagates through the plate, which are important if the metal is to undergo high temperature industrial processes. To assess these properties, a collection of plates are heated at a specified position, though realistically the placement of the flame will vary with each measurement.  The engineer records the temperature in the center of each plate after a specified period of time. This yields field data from which the thermal properties must be inferred.  

An analyst presented with the field data makes the following observations about the system.  First, the field data can be modeled as the solution to a PDE that depend on the unknown thermal properties. Learning about the thermal properties corresponds to calibrating the coefficients of this PDE. The relationship between the coefficients and the measured temperature is complex: it is not one-to-one and numerical computations can be slow. Second, the temperature measurements arise from distinct metal plates, so there is random variation in the parameters. Third, and a particular interest of this paper, certain experimental conditions do vary but are known by the engineer, perhaps only up to their distribution. For example, the position of the flame may have known Gaussian error due to imperfect physical set-ups.  These inputs parameters can be thought of as stochastic but with a known distribution, in contrast to the thermal properties. 
\end{exam}

\begin{exam} Consider a model that includes observational noise,  say $Q(\lambda,\gamma)=f(\lambda)+g(\gamma)$, for some specified $f$ and $g$ where $\lambda$ follows an unknown distribution and $\gamma$ is Gaussian with known mean and variance. In this setting, $\lambda$ is a stochastic calibration parameter and $\gamma$ a stochastic control parameter. 
\end{exam}

\begin{exam}
    A stochastic control parameter can result from known scientific processes as well. For example, the surface temperature of a body of water may vary  seasonally in a known manner and impact  hydrological data that depend on temperature.
\end{exam}

In the simpler setting where all of the stochastic inputs are unknown, i.e., only calibration parameters are present, the following inferential goal can be stated: estimate an input distribution for the calibration parameters, whose induced distribution when propagated through a computer model is consistent with field data. We refer to this as a Statistical Calibration Problem (SCP). A nonparametric framework has been developed for  the SCP using  a  Bayesian approach based on disintegration of measures \cite{breidt2011, butler2012, butler2014, butler2018combining,chi2021,bingham2024, esip_2025}. In this approach, an explicit integral formula for the solution is constructed by disintegrating  the prior and posterior probability measures with respect to the equivalence classes associated with the inverse of a computer model mapping. The procedure results in an explicit formula for the probability measure on the input space whose pushforward measure induced by the map matches the observed data distribution. 

We return to the scenario that  features stochastic control parameters which are assumed to have a specific distribution. Now the problem is to compute a solution of the SCP whose marginal on the control parameters matches the specified distribution. A straightforward application of the technique of \cite{esip_2025}, which treats the distribution of all parameters as unknown, yields a data-consistent input probability measure  on the joint parameter space, but the marginal distribution for the control variables does not, in general, match the known form. In this paper, we introduce a methodology that solves such ``constrained'' calibration problems by producing solutions that respect the known marginal distribution. This greatly extends the range of the calibration problems that can be solved using the approach in \cite{esip_2025}.

The proposed method is based on two core techniques: augmenting the output space with a component for the control parameters with known distribution, then using optimal transport to compute a joint measure on the augmented output space (from an infinite number of possibilities) that preserves the marginals on the output and the control parameters. We  apply the technique of \cite{esip_2025} on the augmented output space with the computed joint measure to estimate an input distribution solving the SCP whose marginal distribution for the control parameters matches its specified distribution. 

In \Cref{section:unconstrained:SCP}, we  review the methodology of the SCP as in \cite{esip_2025}. Then, in \Cref{section:constrained:SCP}, we introduce the augmentation step and its application in a special case where the control parameters and the corresponding field observations are paired. We then consider the more challenging case where the joint distribution of the control parameters and their field observations is unknown, which motivates the use of optimal transport. In \Cref{section:examples} we consider a simple quadratic computer model and then  simulate the heat equation setting of \Cref{example:heat:equation:intro}. We give closing remarks in \Cref{section:discussion}. In the appendix, we discuss regularity conditions and alternative formulations of our optimal transport approach. Code needed to reproduce all computational experiments and their associated data and plots is available at \url{https://github.com/akprasadan/escp-constrained}.

\section{The Statistical Calibration Problem}
\label{section:unconstrained:SCP}

A mathematical framework for the SCP has been developed in several works \cite{breidt2011, butler2012, butler2014, butler2018combining, yang2018}, and we review the most recent version \cite{esip_2025}. For any subset $A$ of Euclidean space, let $\cB_A$ denote the Borel $\sigma$-algebra restricted to $A$. We assume there is a measurable space $(\Lambda,\cB_{\Lambda})$  with $\Lambda\subset\RR^n$ compact, which represents a space of calibration parameters. Let $Q\colon \Lambda\to \cD  \coloneq Q(\Lambda)\subseteq\RR^m$  be a map representing  the behavior of a computer model. Further assume there is an unobserved probability measure $\TGDmarg$ on the input space $(\Lambda,\cB_{\Lambda})$, which we call a \textit{trial-generating distribution (TGD)}. The TGD induces a pushforward measure  $P_{\cD}=Q\TGDmarg$ on $(\cD,\cB_{\cD})$. We assume that $P_{\cD}$ is empirically observable from a random sample of measurements $\{q_i\}_{i=1}^K$. The \textit{statistical calibration problem (SCP)} is to determine a probability measure $\SCPmarg$ on $(\Lambda,\cB_{\Lambda})$ such that $Q\SCPmarg=P_{\cD}$. In other words, sampling from $\SCPmarg$ and applying the computer model should yield observations consistent with the empirical sample from $P_{\cD}$.

\begin{continuance}{example:heat:equation:intro}
Recall the experiment involves heating many metal plate samples and recording the steady state temperature at some position. The intrinsic material properties will vary across samples, impacting the distribution of recorded temperatures. The set $\Lambda$ of calibration parameters is the set of possible convection and diffusion parameters and $\TGDmarg$ is their unobserved distribution. The observed temperature is modeled as the output $Q$, which is the value of a numerical solver for the heat equation at a fixed position  assuming a perfectly placed heat source at some position. A collection of field temperature measurements $\{q_i\}_{i=1}^K\subset \cD$ will follow some  distribution $P_{\cD}$. An SCP solution $\SCPmarg$ will determine a distribution over the convection and diffusion parameters such that if we draw a sample $\{\lambda_i'\}_{i=1}^{K'}$ of parameters from $\SCPmarg$ and compute $q_i'=Q(\lambda_i')$, then the computed $\{q_i'\}_{i=1}^{K'}$ will have the same distribution as the field data $\{q_i\}_{i=1}^K$. Note that the estimated $\SCPmarg$ will almost certainly not match the latent $\TGDmarg$, because $Q$ is not one-to-one and there are many distributions consistent with $P_{\cD}$. 
\end{continuance} 

An explicit integral expression for the SCP solution is determined using disintegration of measures \cite{chang1997,esip_2025}. The disintegration of $\TGDmarg$ according to $Q$ is defined by \[\TGDmarg(A)= \int_{\cD} \int_{Q^{-1}(q)\cap A} d\TGDmarg(\lambda|q) dP_{\cD}(q)\; A \in \cB_{\Lambda},\] where $\{\TGDmarg(\cdot|q)\}_{q\in \cD}$ is a family of regular conditional probability measures on each pre-image $Q^{-1}(q)$, unique $P_{\cD}$ almost surely. But $\TGDmarg$ and therefore $\{\TGDmarg(\cdot|q)\}$ is unknown.  Fortunately, if we replace $\{\TGDmarg(\cdot|q)\}_{q\in\cD}$ with another family of probability measures on each $Q^{-1}(q)$, then the right-hand side yields another measure with the same pushforward $P_{\cD}$. To simplify this choice, we begin with a prior $P_p$, which disintegrates into a family of conditional measures $\{P_p(\cdot|q)\}$. This yields a prior-informed solution to the SCP defined by \[\SCPmarg(A)= \int_{\cD} \int_{Q^{-1}(q)\cap A} dP_{p}(\lambda|q) dP_{\cD}(q), \;A\in\cB_{\Lambda}.\] In the absence of prior knowledge, the uniform prior can be used on $\Lambda$ to obtain the posterior with maximum entropy. Otherwise, the prior should be chosen to reflect domain knowledge for a more informative solution.

With the addition of several regularity conditions on $\Lambda$, $P_p$, $Q$, and the TGD, \cite{esip_2025} demonstrate that this disintegration has several desirable properties. For example, it admits an a.e. continuous density in Lebesgue measure, with a representation in terms of parametrized surface integrals over the manifolds $Q^{-1}(q)$. Further notions of continuity with respect to the TGD are established in \cite{prasadan2026continuity}. Importantly, performing the actual disintegration of $P_p$ is unnecessary in practice, and $\SCPmarg$ can be numerically estimated with an algorithm similar to importance sampling, which is detailed in \cite[Section 4.2]{esip_2025} and also replicated in \Cref{algorithm:unconstrained} for convenience.  The necessary regularity conditions are restated in \ref{section:regularity:conditions}. We note that \Cref{algorithm:unconstrained} uses a histogram estimator for $P_{\cD}$, i.e., partitions $\cD$ into hyper-rectangles $I_i$ and estimates $P_{\cD}(I_i)$ with $\frac{|\{q_k:q_k\in I_i\}|}{K}$. Since $\cD$ is typically low-dimensional, especially compared to $\Lambda$, this is reasonable.  However, \cite[Theorem 4.6]{esip_2025} show that any estimator satisfying certain convergence assumptions, e.g., kernel density estimators, can be used. 

\begin{algorithm}[t]
\caption{Unconstrained SCP Algorithm of \cite{esip_2025}}
\begin{enumerate}
\item[\textbf{Setup:}] Fix $\Lambda\subset \RR^n$, a map $Q\colon\Lambda\to\cD$ where $\cD=Q(\Lambda)\subset \RR^m$, unknown TGD $\TGDmarg$ on $\Lambda$, 
\item[\textbf{Input}:]  Observations $\{q_i\}_{i=1}^K\sim P_{\cD}=Q\TGDmarg$, a prior density $\rho_p$ on $\Lambda$, a set $A\in\cB_{\Lambda}$ of interest
\item Partition $\cD$ into $M^m$ hyper-rectangles $I_1,\cdots, I_{M^m}$. 
\item Draw $\{\lambda_j^{(p)}\}_{j=1}^J$ from $\rho_p$
\item Compute \[\SCPmarg(A)=\sum_{i=1}^{M^{m}} \frac{|\{\lambda_j^{(p)}:\lambda_j^{(p)}\in A \text{ and }Q(\lambda_j^{(p)})\in I_i\}|}{|\{\lambda_j^{(p)}: Q(\lambda_j^{(p)})\in I_i\}|}\cdot \frac{|\{q_k:q_k\in I_i\}|}{K}.\]
\end{enumerate}
\label{algorithm:unconstrained}
\end{algorithm}

The approach to computer model calibration in \cite{esip_2025} is the natural nonparametric variation of the classic parametric approach \cite{kennedy_ohagan,higdon_2004, higdon_2008, Wang01112009, tuo2015,  plumlee17, GrosskopfBingham20,SungTuo24}. There are connections to early work in geophysics \cite{Mosegard2002,Mosegaard2006} and inverse ensemble forecasting \cite{marder2011multiple,britton2013experimentally}. Moreover, the use of disintegration is motivated by the fact that in the parametric setting, Bayes Theorem is itself an consequence of disintegration \cite{Schervish}. The nonparametric approach in \cite{esip_2025} shares a theoretical foundation with nonparametric Bayesian inference \cite{Hjort_Holmes_2010, ghosalvaart}, although a distinction is that in the latter, a prior is itself a collection of distributions, whereas \cite{esip_2025} specifies a single domain-informed prior distribution.
 
Generally, mathematical approaches to calibration employ a stochastic optimization framework involving the introduction of stochastic noise and regularization in some form \cite{Li2024inverse, Vadeboncoeura2025}. This includes the well-studied Bayesian Inverse Problem approach \cite{calvetti_2014, Stuart_Bayesian, tarantola}. The significant differences between the latter and the approach in \cite{esip_2025} is explored in \cite{bingham2024}.

\section{Input Parameters With Specified Marginal Distributions}
\label{section:constrained:SCP}

We introduce the main problem of interest, where the set of input parameters consists of a set $\Lambda\subset\RR^n$ of calibration parameters and a set $U\subset\RR^k$ of control parameters.  The TGD is now a measure $\TGDjoint$ over the product space $\Lambda \times U$, where the marginal $\TGDU(\cdot)$, defined by $\TGDU(B)=\TGDjoint(\Lambda\times B)$ for any $B\in\cB_U$, is known or specified. However, the joint TGD $\TGDjoint$ and its marginal $\TGDmarg$ in $\Lambda$ are unknown. The computer model is of the form $Q\colon\Lambda\times U\to\cD$, where $\cD\subset \RR^m$ and $m\le n+k$. We again denote the pushforward by $Q$ of the TGD by $P_{\cD}\coloneq Q\TGDjoint$. As before, $P_{\cD}$ is observed with field data $\{q_i\}_{i=1}^K$.

\begin{continuance}{example:heat:equation:intro}
    Previously, we assumed the position of the heat source is known and deterministic with no human or machine error, and the only stochastic inputs are the thermal properties. Now, $U$ represents the space of possible positions, which for this illustration is assumed to have some known distribution $\TGDU$, e.g., Gaussian.  
\end{continuance}

The SCP solution in this new setting should be a probability measure on $\Lambda\times U$ that is both consistent with the observed data and with the known control parameter distribution $\TGDU$ to be scientifically plausible. This motivates the following abstract modification of the SCP:

  \begin{defi}  The \textit{statistical calibration problem with partially known stochastic input distributions}: determine a probability measure $\SCPjoint$ on $\Lambda\times U$ such that $Q\SCPjoint=Q \TGDjoint\eqcolon  P_{\cD}$ and $\pi_U \SCPjoint=\pi_U \TGDjoint\eqcolon \TGDU$, where $\pi_U\colon \Lambda\times U\to U$ is the orthogonal projection onto $U$. As a shorthand, we  refer to this as a \textit{constrained SCP}. If we drop the $\pi_U\SCPjoint=\TGDU$ requirement, we refer to the resulting problem as an \textit{unconstrained SCP} on $\Lambda\times U$. This is equivalent to the SCP described in \Cref{section:unconstrained:SCP}.
  \end{defi}

 Na\"ively treating the new problem as an unconstrained SCP,  we could specify  a prior distribution $P_p$ on $\Lambda\times U$, disintegrate it using $Q$, and obtain the solution \[\SCPjoint(E)= \int_{\cD} \int_{Q^{-1}(q)\cap E} dP_{p}(\lambda, u|q) dP_{\cD}(q),\; E\in\cB_{\Lambda}\otimes \cB_{U},\] where $Q^{-1}(q)$ is  a contour in $\Lambda\times U$ and $P_p(\lambda,u|q)$ is a conditional measure over $Q^{-1}(q)$. While $\SCPjoint$ is assured to have the correct pushforward $P_{\cD}$ by the results of \cite{esip_2025}, no such guarantee is possible for the $U$ marginal, which requires that for all $B\in\cB_U$, \[\TGDU(B)=\SCPjoint(\Lambda\times B) =\int_{\cD} \int_{Q^{-1}(q)\cap (\Lambda\times B)} dP_{p}(\lambda, u|q) dP_{\cD}(q).\] 

\begin{exam}[Limitations of the Unconstrained SCP]
\label{example:counterexample}

 We  illustrate a simple example where the solution of the unconstrained SCP lacks the desired $U$ property. Let $\Lambda = [-10, 10]\times[-10,10]$, $U=[-20, 20]$, $Q(\lambda_1,\lambda_2,u)=\lambda_1^2+\lambda_2^2+u^2$. Assume the unobserved TGD $\TGDjoint$ is the mixture $\frac{1}{2}\cN(\mu_A,\Sigma_A) +\frac{1}{2}\cN(\mu_B,\Sigma_B)$ where \[\mu_A=\begin{bmatrix}
     3, & 3,  &6
\end{bmatrix}^T,\, \mu_B=\begin{bmatrix}
     -3, & -3,  &-3
\end{bmatrix}^T, \, \text{and }\Sigma_A =\Sigma_B= \mathrm{diag}(1, 1, 9).\] Note that $\TGDU=\frac{1}{2}\cN(6, 9)+\frac{1}{2}\cN(-3, 9)$. We assume a prior distribution $P_p=\Unif[-10, 10]\times\Unif[-10,10] \times \TGDU$, reflecting perfect knowledge of the $U$ marginal but assuming little on $\Lambda$ other than its domain. We simulate $\{q_i\}_{i=1}^{K}$ for $K=\num{20,000}$ by drawing $(\lambda_i, u_i)\sim \TGDjoint$ and setting $q_i=Q(\lambda_i,u_i)$.  Then we use \Cref{algorithm:unconstrained} on the $\{q_i\}$ with the prior distribution $P_p$ (having discarded the $\{(\lambda_i, u_i)\}_{i=1}^K$ samples). This yields a solution $\SCPjoint$, from which we marginalize out $\Lambda$ and obtain a measure $\hat P_U$. 

As shown in the top-left plot of \Cref{fig:counterexample}, the $U$ marginal of the unconstrained solution overweights the left mode as well as the domain between the modes. On the other hand, using the constrained method we  discuss  below (\Cref{algorithm:constrained:unpaired}) recovers the expected symmetric bimodality. The top-right plot shows that the pushforward by $Q$ for both SCP solutions matches the observed pushforward, as guaranteed by both methods. The bottom two plots show the SCP solutions for the unconstrained and constrained, respectively, marginalized to $\Lambda$. Specifically, for any SCP solution $\SCPjoint$, we create a grid of rectangles $\{A_{ij}\}$ partitioning $\Lambda$, compute $\SCPjoint(A_{ij}\times U)$, and map the probability values to a color gradient. The constrained method (bottom right) yields high densities on contours of $Q$ that pass through the unobserved values in $\Lambda$, whereas the unconstrained method's density (bottom left) is more diffuse. Such contours are to be expected, since they represent the regions of TGD values, weighted by the prior distribution, that are consistent with the observed values in $\cD$. We  discuss the construction and interpretation of these SCP solutions in greater detail in  \Cref{section:examples}. 

\begin{figure}[t]
    \centering
    \includegraphics[width=0.7\textwidth]{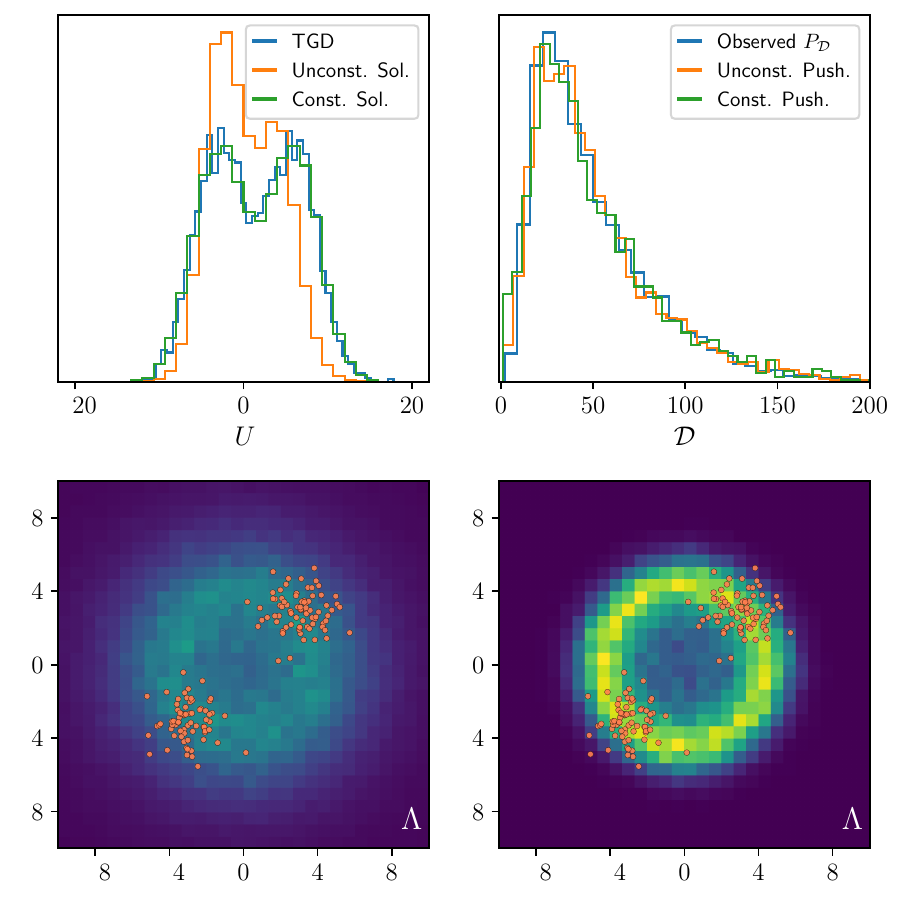} 
    \caption{We present the results of the simulation of \Cref{example:counterexample}. Top left: The $U$ marginal of three distributions: that of the TGD $\TGDjoint$, the unconstrained SCP (\Cref{algorithm:unconstrained}), and the proposed constrained SCP using unpaired data (\Cref{algorithm:constrained:unpaired}). Top right: the pushforwards by $Q$ of the TGD, unconstrained SCP, and proposed constrained SCP. Bottom left: A heatmap depicting a discrete estimate of the unconstrained SCP on $\Lambda$ after marginalizing out $U$. Bottom right: Same as bottom left, but using the proposed constrained SCP. The same color scale is used for both of the bottom plots. The construction of the heatmaps is explained in greater detail in \Cref{section:examples}.}
    \label{fig:counterexample}
\end{figure}
\end{exam}

\subsection{Augmenting the Computer Model}

Observe that the characterization of the SCP with a known marginal depends on matching two different pushforwards, one induced by $Q$, and the other by $\pi_U$. This motivates introducing an \textit{augmented} map $\tilde Q$ for which preserving its pushforward preserves both aforementioned pushforwards automatically. Hence, we define $\tilde Q\colon\Lambda\times U \to \cD\times U$ by \begin{equation} \label{eq:augmented:Q}
\tilde{Q}(\lambda,  u)\coloneq \begin{bmatrix}
    Q(\lambda, u)\\ \pi_U(\lambda, u)
\end{bmatrix} = \begin{bmatrix}
    Q(\lambda, u)\\ u
\end{bmatrix}.\end{equation} By construction, the pushforward $\tilde Q \TGDjoint$ has marginals $P_{\cD}\coloneq Q\TGDjoint$ and $\TGDU\coloneq \pi_U \TGDjoint$ on $\cD$ and $U$, respectively. The following theorem demonstrates that this augmentation provides a way to solve the partially-specified SCP.

\begin{theorem} Suppose $\SCPjoint$ is a probability measure on $\cB_{\Lambda}\otimes\cB_{U}$  such that $\tilde Q \SCPjoint =\tilde Q  \TGDjoint$. Then $\SCPjoint$ solves the SCP with a partially known stochastic input distribution, i.e., $Q\SCPjoint=Q\TGDjoint$ and $\pi_U\SCPjoint=\pi_U \TGDjoint$.
\end{theorem}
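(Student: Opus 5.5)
The plan is to use the functoriality of pushforwards. Let $\pi_{\cD}\colon \cD\times U\to\cD$ and $\pi'_U\colon \cD\times U\to U$ be the coordinate projections on the augmented output space. By the definition \eqref{eq:augmented:Q} of the augmented map, we have the identities of maps $Q=\pi_{\cD}\circ\tilde Q$ and $\pi_U=\pi'_U\circ\tilde Q$ on $\Lambda\times U$. Since a pushforward along a composition is the iterated pushforward, for any probability measure $P$ on $\cB_{\Lambda}\otimes\cB_U$ we get
\[
QP=\pi_{\cD}(\tilde Q P)\quad\text{and}\quad \pi_U P=\pi'_U(\tilde Q P).
\]
The theorem then follows by applying these identities to $P=\SCPjoint$ and $P=\TGDjoint$ and using the hypothesis $\tilde Q\SCPjoint=\tilde Q\TGDjoint$:
\[
Q\SCPjoint=\pi_{\cD}(\tilde Q\SCPjoint)=\pi_{\cD}(\tilde Q\TGDjoint)=Q\TGDjoint,
\]
and in the same way $\pi_U\SCPjoint=\pi_U\TGDjoint$.

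Concretely, for a set $C\in\cB_{\cD}$ I would verify the first identity as
\[
\SCPjoint(Q^{-1}(C))=\SCPjoint(\tilde Q^{-1}(C\times U))=\tilde Q\TGDjoint(C\times U)=\TGDjoint(Q^{-1}(C)),
\]
using $\tilde Q^{-1}(C\times U)=Q^{-1}(C)\cap(\Lambda\times U)=Q^{-1}(C)$. For $B\in\cB_U$, the second identity follows from
\[
\tilde Q^{-1}(\cD\times B)=\{(\lambda,u):u\in B\}=\Lambda\times B=\pi_U^{-1}(B).
\]

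The only point needing care is measurability, and this is the main (mild) obstacle. We need $\tilde Q$ to be $\cB_{\Lambda}\otimes\cB_U/\cB_{\cD}\otimes\cB_U$-measurable, which holds because both components are measurable ($Q$ by the standing assumptions, $\pi_U$ because it is continuous). We also need the cylinder sets $C\times U$ and $\cD\times B$ to belong to $\cB_{\cD\times U}=\cB_{\cD}\otimes\cB_U$, which holds for Euclidean Borel sets. With these in place the argument is purely set-theoretic, and no regularity conditions from \cite{esip_2025} are required.
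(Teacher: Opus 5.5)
Your proof is correct and follows the paper's argument: both verify the two pushforward identities directly from the preimage relations $Q^{-1}(C)=\tilde Q^{-1}(C\times U)$ and $\pi_U^{-1}(B)=\tilde Q^{-1}(\cD\times B)$, and your functoriality framing ($Q=\pi_{\cD}\circ\tilde Q$, $\pi_U=\pi'_U\circ\tilde Q$) is the same argument stated more abstractly. In the second step you use the correct set $\cD\times B$, where the paper's written proof has the typo $\cD\times U$, and you also make explicit the measurability of $\tilde Q$, which the paper leaves implicit.
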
 
    \begin{proof}
        Pick $C\in\cB_\cD$. Observe $Q^{-1}(C)=\tilde Q^{-1}(C\times U)$, so \[(Q\SCPjoint)(C) =(\tilde Q \SCPjoint)(C\times U)= (\tilde Q \TGDjoint)(C\times U) = (Q\TGDjoint)(C).\] Next, pick $B\in \cB_{U}$. Then using $\pi_U^{-1}(B)=\Lambda\times B = \tilde Q^{-1}(\cD\times U)$, \begin{align*}
    (\pi_U\SCPjoint)(B) &=\SCPjoint(\Lambda\times B)= (\tilde Q \SCPjoint)(\cD\times U) = (\tilde Q  \TGDjoint)(\cD\times U) \\ &= \TGDjoint(\Lambda\times B)=\TGDU(B).
\end{align*}
    \end{proof}

Thus, finding  a measure $\SCPjoint$ that matches the pushforward of the TGD by $\tilde Q$ solves the SCP with a known marginal. Consequently, we have reduced the constrained SCP to an unconstrained SCP with the augmented computer model, meaning we can directly apply the method of \cite{esip_2025}. The disintegration  involves contours of $\tilde Q^{-1}(q, u)$ for $(q,u)\in\cD\times U$. That is, we numerically estimate via \Cref{algorithm:unconstrained} \begin{equation} \label{eq:disintegration:constrained}
    \SCPjoint(E)= \int_{\cD\times U} \int_{\tilde Q^{-1}(q, u)\cap E} dP_{p}(\lambda', u'|q, u) \,d (\tilde Q \TGDjoint)(q,u),\; E\in\cB_{\Lambda}\otimes \cB_{U}. 
\end{equation}The prior distribution $P_p$ is  placed on $\Lambda\times U$, and since the $U$ marginal is known, we suggest setting $P_p$ to have the density $\rho_p=\rho_{\Lambda} \rho_U^{\mathrm{t}}$, where $\rho_{\Lambda}$ is a prior density on $\Lambda$ and $\rho_U^{\mathrm{t}}$ is the  density of $\TGDU.$ Such densities exist from the regularity conditions in the appendix.

However, one complication remains: the unconstrained SCP requires knowledge of or samples from the pushforward $\tilde Q \TGDjoint$. To be clear, we do always have access to the pushforwards $P_{\cD}=Q\TGDmarg$ and $\TGDU=\pi_U \TGDjoint$, but these are only the marginal distributions of $\tilde Q \TGDjoint$. Practically, the distinction corresponds to whether the values of control parameters are recorded alongside the field observations. We thus consider two cases.

\subsection{Paired Data in the Augmented Space}
\label{subsection:paired:data}
 In the simpler case, we record values as pairs $\{(q_i, u_i)\}_{i=1}^K$ where each $q_i= Q(\lambda_i, u_i)$ for some $(\lambda_i, u_i)\sim \TGDjoint$. The pairs $\{(q_i, u_i)\}_{i=1}^k$  are thus distributed according to $\tilde Q \TGDjoint$. The $\{\lambda_i\}_{i=1}^K$ are unobserved.  In \Cref{example:heat:equation:intro}, this case arises if the  location of the heat source is recorded alongside the temperature in each trial. If such paired data or distributional knowledge is known, then the methods of \cite{esip_2025} immediately apply, assuming the regularity conditions stated in \ref{section:regularity:conditions}. The explicit procedure is given in \Cref{algorithm:constrained:paired}, which is equivalent to \Cref{algorithm:unconstrained} with the suitable modifications of the domain, computer model, TGD, and prior distribution. As explained  for \Cref{algorithm:unconstrained}, we may replace the histogram estimator of $P_{\cD}$ with a kernel density estimator, for example.

\begin{algorithm}[t]
\caption{Constrained SCP Algorithm with Paired Data}
\begin{enumerate}
\item[\textbf{Setup:}] Fix $\Lambda\times U\subset \RR^n\times \RR^k$, a map $Q$ and its augmentation $\tilde Q$, $\cD=Q(\Lambda\times U)\subset\RR^m$,  unknown TGD $\TGDjoint$ on $\Lambda\times U$,  known marginal TGD $\TGDU$ 
\item[\textbf{Input}:]  Observations $\{(q_i, u_i)\}_{i=1}^K\sim \tilde Q \TGDjoint$, prior density $\rho_p=\rho_{\Lambda}\rho_U^\mathrm{t}$ on $\Lambda\times U$,  set $A\in\cB_{\Lambda\times U}$ of interest
\item Partition $\cD\times U$ into $M^{m+k}$ hyper-rectangles $I_1,\cdots, I_{M^{m+k}}$.
\item Draw $\{(\lambda_j^{(p)}, u_j^{(p)})\}_{j=1}^J$ from $\rho_p$. Set $\eta_j^{(p)}= (\lambda_j^{(p)}, u_j^{(p)})$ for shorthand.
\item Compute \begin{align*}
    \SCPjoint(A) = \sum_{i=1}^{M^{m+k}} \frac{|\{\eta_j^{(p)}:\eta_j^{(p)}\in A \text{ and }\tilde Q(\eta_j^{(p)})\in I_i\}|}{|\{\eta_j^{(p)}: \tilde Q (\eta_j^{(p)})\in I_i\}|} \cdot \frac{|\{(q_k, u_k):(q_k,u_k)\in I_i\}|}{K}.
\end{align*}
\end{enumerate}
\label{algorithm:constrained:paired}
\end{algorithm}

\subsection{Unpaired Data in the Augmented Space}
\label{subsection:unpaired:data}
In the more challenging scenario, such paired data are not available. Instead, we are only given data $\{q_i\}_{i=1}^K$ where $q_i=Q(\lambda_i, u_i)$ for latent pairs $\{(\lambda_i, u_i)\}_{i=1}^K$. However, we still retain knowledge of $\TGDU$ and can generate samples from it.  We  rely on techniques from optimal transport to generate paired data from a surrogate of the joint pushforward $\tilde Q \TGDjoint$. For simplification of the presentation, we assume $U\subset \RR$ and $D\subset\RR$, i.e., $m=k=1$. This is not a requirement; see \cite{multimarginal:optimal:transport} for a discussion of multi-marginal optimal transport techniques. 

We must specify a joint distribution on $\cD\times U$ with known marginals $P_{\cD}$ and $\TGDU$. There are several techniques that accomplish this goal, including copulas \cite{nelson2006} and optimal transport \cite{santambrogio2015, villani2008}. We  focus on optimal transport, but our method does not require this particular choice. In optimal transport, we choose a joint distribution with given marginals such that its average `cost', defined by some loss function, is minimized. For example, if a cost function on $\cD\times U$ is $c(q,u)=|q-u|^2$,  the chosen joint measure is concentrated in regions along the diagonal $q=u$, to the degree permitted by the constrained marginals. Since the joint pushforward measure  is induced by $\tilde Q$, we choose a cost function incorporating $Q$ directly.

Let $\mathfrak{F}\coloneq \mathfrak{F}(P_{\cD},\TGDU)$ denote the set of joint probability measures on $\cD\times U$ with the desired marginals. Clearly $\mathfrak{F}$ is non-empty since it includes the independent coupling $P_{\cD}\times \TGDU$, among others. For a cost function $c\colon \cD\times U\to\RR$ and penalty hyper-parameter $\varepsilon$, the \textit{entropic regularized Kantorovich problem} \cite{nutz2021} solves \begin{equation} \label{eq:kantorovich:regularized}
    \hat \Pi = \argmin_{\Gamma\in\mathfrak{F}}\left[\int_{\cD\times U}c(q,u) d\Gamma(q, u) + \varepsilon\cdot \KL(\Gamma, P_{\cD}\times \TGDU)\right].
\end{equation} We denote by $\KL$ the Kullback-Leibler divergence operator, which for any two probability measures $\nu_1,\nu_2$ on a space $\mathcal{X}$ is defined as
\[\KL(\nu_1,\nu_2)\coloneq \begin{cases}
    \int_{\mathcal{X}}\log\frac{d \nu_1}{d  \nu_2}d  \nu_1, &\text{if }\nu_1\ll\nu_2\\
    \infty, &\text{if }\nu_1\not\ll\nu_2.
\end{cases}\]  The $\KL$ operator specifies a notion of distance between two probability measures. Thus,
the first term of \Cref{eq:kantorovich:regularized} penalizes $\Gamma\in\mathfrak{F}$ according to the cost function, while the second term penalizes deviations of $\Gamma$ from the independent coupling.  To incorporate the geometry of $Q$, we propose setting \begin{equation} \label{eq:cost:function}
    c(q,u)=\int_{\Lambda} |q-Q(\lambda, u)|^2 \rho_{\Lambda}(\lambda) d\lambda,
\end{equation} where $\rho_{\Lambda}$ is the prior density on $\Lambda$ taken from our prior density $\rho_p=\rho_{\Lambda}\rho_U^\mathrm{t}$ on $\Lambda\times U$. Using the cost function \Cref{eq:cost:function} means $\hat\Pi$ place more weight near the output values $q=Q(\lambda,u)$, where $\lambda$ is drawn from some prior and $u\sim \TGDU$. Without the $\KL$ term,  $\hat\Pi$ would overfit to the graph of $Q$ and not admit a density in Lebesgue measure. Regularization controls this overfitting, providing computational benefits, and ensures that $\hat\Pi$ has a density.

The following result states that the choice of cost function \Cref{eq:cost:function} is sufficient  for the existence of a solution to the regularized Kantorovich problem in \cref{eq:kantorovich:regularized}. The proof follows from \cite[Theorem 4.2]{nutz2021}, noting the function $c$ in  \cref{eq:cost:function} is continuous over a compact set.

\begin{theorem} \label{theorem:EOT:solution}
   For any $\varepsilon>0$, \cref{eq:kantorovich:regularized} admits a solution.
\end{theorem}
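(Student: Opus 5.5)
The plan is the direct method of the calculus of variations on the set of couplings $\mathfrak{F}=\mathfrak{F}(P_{\cD},\TGDU)$. Alternatively, one can cite \cite[Theorem 4.2]{nutz2021} once its hypotheses are checked. Those hypotheses amount to measurability and suitable integrability of $c$, and a finite value of the objective at some coupling. I would verify them concretely and also sketch the self-contained argument.

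\textbf{Step 1: regularity of $c$.} Since $\Lambda\times U$ is compact and $Q$ is continuous (regularity conditions in the appendix), $\cD=Q(\Lambda\times U)$ is compact, and $(q,\lambda,u)\mapsto |q-Q(\lambda,u)|^2\rho_\Lambda(\lambda)$ is bounded by $(\operatorname{diam}\text{ of the relevant compact set})^2\rho_\Lambda(\lambda)$. Here $\rho_\Lambda$ is integrable on $\Lambda$. By dominated convergence, $c$ in \cref{eq:cost:function} is continuous on the compact set $\cD\times U$. Hence $c$ is bounded, say $0\le c\le C_0$.

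\textbf{Step 2: finiteness and compactness.} The independent coupling $P_{\cD}\times\TGDU$ lies in $\mathfrak{F}$. It has zero KL term and cost at most $C_0$, so the infimum $I$ of the objective is finite. Because $\cD\times U$ is compact, $\mathfrak{F}$ is tight, and it is weakly closed, since marginal constraints are preserved under weak limits (test against continuous functions of one variable). By Prokhorov, $\mathfrak{F}$ is weakly compact. Take a minimizing sequence $\Gamma_j\in\mathfrak{F}$ and extract a weakly convergent subsequence $\Gamma_j\to\hat\Pi\in\mathfrak{F}$.

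\textbf{Step 3: lower semicontinuity.} The map $\Gamma\mapsto\int c\,d\Gamma$ is weakly continuous because $c$ is bounded and continuous. The map $\Gamma\mapsto\KL(\Gamma,P_{\cD}\times\TGDU)$ is weakly lower semicontinuous, by the Donsker--Varadhan variational formula: it is a supremum of weakly continuous functionals $\int g\,d\Gamma-\log\int e^g\,d(P_{\cD}\times\TGDU)$ over bounded continuous $g$. With $\varepsilon>0$ the sum is lower semicontinuous, so the objective at $\hat\Pi$ is at most $\liminf$ of the objective along the sequence, which equals $I$. Thus $\hat\Pi$ is a minimizer. Strict convexity of KL additionally gives uniqueness, though that is not needed.

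The main obstacle is Step 1, specifically confirming the continuity and boundedness of $c$. This depends on the appendix regularity conditions guaranteeing continuity of $Q$ and integrability of $\rho_\Lambda$. If $Q$ were only measurable and bounded, I would instead fall back on the general (measurable, integrable cost) version in \cite{nutz2021}. In that setting, existence follows from the same argument run on densities $d\Gamma/d(P_{\cD}\times\TGDU)$, which are uniformly integrable along a minimizing sequence, using weak $L^1$ compactness.
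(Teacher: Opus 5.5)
Your proposal is correct and follows the paper's route: the paper's proof simply cites \cite[Theorem 4.2]{nutz2021} after noting that $c$ in \cref{eq:cost:function} is continuous on the compact set $\cD\times U$, which is your Step 1 plus the citation option you mention. Your Steps 2--3 just unpack that theorem by the direct method (weak compactness of $\mathfrak{F}$, continuity of $\Gamma\mapsto\int c\,d\Gamma$, Donsker--Varadhan lower semicontinuity of $\KL$). Your fallback to the measurable-cost version is also well placed, since the appendix only guarantees that $Q$ is a.e.\ continuously differentiable, not continuous.
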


Estimation of $c(q,u)$ is performed via standard Monte Carlo techniques. Once $c$ is computed, we solve for $\hat \Pi$ using the Sinkhorn-Knopp (SK) algorithm (\cite{sinkhorn_1967,nutz2021}).  We use the POT: Python Optimal Transport package for the SK computations \cite{flamary2021pot, flamary2024pot}. Once  a $\hat \Pi$ is constructed, we generate synthetic field data by drawing samples $(q_i, u_i)\sim \hat \Pi$, which are assured to have marginals $P_{\cD}$ and $\TGDU$.  Then we solve the unconstrained SCP with the augmented $\tilde Q$ as described for the paired data. The full details are given in \Cref{algorithm:constrained:unpaired}. 

\begin{algorithm}[t]
\caption{Constrained SCP Algorithm with Unpaired Data}
\begin{enumerate}
\item[\textbf{Setup:}] Same as \Cref{algorithm:constrained:paired}, but $\cD\subset\RR$ and $U\subset\RR$ 
\item[\textbf{Input}:]  Observations $\{q_i\}_{i=1}^{K_1}\sim P_{\cD}$ and $\{u_i\}_{i=1}^{K_2}\sim \TGDU$, prior density $\rho_p=\rho_{\Lambda}\rho_U^\mathrm{t}$ on $\Lambda\times U$, SK regularization hyperparameter $\varepsilon$, and a set $A\in\cB_{\Lambda\times U}$ of interest
\item Construct an estimate $\hat\Pi$ of $\tilde Q \TGDjoint$.
    \begin{enumerate}
       \item Fix bin counts $M_1$ and $M_2$ on $\cD$ and $U$ respectively. 
    \item Estimate a normalized histogram density on $\cD$ using $M_1$ bins, and on $U$ using $M_2$  bins, using the respective data $\{q_i\}_{i=1}^{K_1}$ and $\{u_i\}_{i=1}^{K_2}$.  \label{step:alg:histogram}
    \item Form grid of $M_1\times M_2$ points $\{(q_r^{(c)}, u_s^{(c)})\}_{r,s}$ over $\cD\times U$. For each $(q_r^{(c)}, u_s^{(c)})$, draw $K_3$ points $\{\lambda_{j, r,s}^{(p)}\}$ from $\rho_\Lambda$. Estimate $c(q_r^{(c)}, u_s^{(c)})$ defined in \eqref{eq:cost:function} by the Monte Carlo estimate \[\frac{1}{K_3}\sum_{j=1}^{K_3}|q_r^{(c)} - Q(\lambda_{j,r,s}^{(p)}, u_s^{(c)})|^2.\] Form an $M_1\times M_2$ cost matrix $\mathbf{C}=\{(c(q_r^{(c)}, u_s^{(c)})\}_{r,s}$.
    \item Run SK's algorithm using  \Cref{step:alg:histogram}'s empirical distributions, $\mathbf{C}$, and the regularization hyperparameter $\varepsilon$ to get a measure $\hat\Pi$ on $\cD\times U$.
    \end{enumerate}

\item Draw $\{(q_i', u_i')\}_{i=1}^{K_4}\sim \hat\Pi$. Carry out steps (1)--(3) of \Cref{algorithm:constrained:paired} using this paired data as input. 
\end{enumerate}
\label{algorithm:constrained:unpaired}
\end{algorithm}

This procedure depends on a choice of hyperparameter $\varepsilon$ in the SK step and tuning is required.  Experimentally, we find that the SCP solution tends to be very stable for a broad range of $\varepsilon$, even as the SK solution shows broad variation in the shape of the joint distribution. However, taking $\varepsilon$ to be too small leads to numerical issues in SK and thereby causes downstream issues in the SCP solution. We also found in these examples that using unpaired data and the SK algorithm yields similar estimates as  paired data, even though we are using  a joint distribution that certainly differs from the true distribution.

The authors considered an alternative formulation of the Kantorovich problem \Cref{eq:kantorovich:regularized}. For example, we can integrate out $\Lambda$ subsequent to the minimization problem. That is, we  set $c_{\lambda}(q,u)=|q-Q(\lambda, u)|^2$, solve the corresponding regularized Kantorovich problem to obtain a $\lambda$-dependent solution $\hat \Pi_{\lambda}$, and finally set $\hat \Pi=\int_{\Lambda} \hat\Pi_{\lambda} \rho_{\Lambda}(\lambda) d\lambda$ using a prior density $\rho_{\Lambda}$ on $\Lambda$. A similar argument as \Cref{theorem:EOT:solution} shows each minimization problem has a solution, and we prove in \ref{section:measurability} that $\lambda\mapsto \hat\Pi_{\lambda}$ is measurable, which makes the averaging operation well-posed. A simple check shows that $\hat \Pi$ has marginals $P_{\cD}$ and $\TGDU$ as desired. The disadvantage of this approach is having to repeatedly re-run the SK algorithm for a large number of $\lambda\in\Lambda$. 

\section{Examples}
\label{section:examples}

We present two experiments. In each, we solve both the paired and unpaired constrained SCP, using the paired case  as a benchmark for comparison. We  consider $\Lambda$ of dimensions 2 and 3 and both $U$ and $\cD$ of dimension 1. In the first experiment,  $Q$ is a  quadratic function with an additive noise term representing a stochastic control parameter. In the second experiment, $Q$ is the solution of a heat equation  based on \Cref{example:heat:equation:intro}, where the  control parameter is the vertical position of the heat source. Before we discuss the specific experiments, we explain the general computational procedure and visualization techniques.

\subsection{Computational Details}
\label{subsection:computational:details}

To create synthetic data for the experiments, we  draw pairs $\{(\lambda_i, u_i)\}_{i=1}^{K_1}$ from the TGD (which is unknown in applications), set $q_i=Q(\lambda_i, u_i)$, and discard the $\{\lambda_i\}_{i=1}^{K_1}$. In the paired case, we treat the $\{(q_i,u_i)\}_{i=1}^{K_1}$ as the observations. We use \Cref{algorithm:constrained:paired} to produce an estimated measure $\SCPjoint$ on $\Lambda\times U$ whose pushforward by $Q$ matches the distribution $P_{\cD}$ of the $\{q_i\}$ and whose $U$ marginal is $\TGDU$. We always choose the prior distribution of the independent coupling of the uniform distribution over $\Lambda$ and the known $\TGDU$. This algorithm yields a measure $\SCPjoint$ on $\Lambda\times U$.

To simulate the unpaired case, we compute $\{q_i\}_{i=1}^{K_1}$ as above then discard the pairs $\{(\lambda_i, u_i)\}_{i=1}^{K_1}$. Next, we draw a new independent sample $\{\tilde u_j\}_{j=1}^{K_2}$, and the two unpaired datasets $\{q_i\}_{i=1}^{K_1}$ and $\{\tilde u_j\}_{j=1}^{K_2}$ form the simulated observations. Then we run \Cref{algorithm:constrained:unpaired} with these inputs to obtain another estimate $\SCPjoint$, with the same pushforward $P_{\cD}$ as before and the correct $U$ marginal. 

It is important to distinguish the  sample size from algorithmic hyper-parameters. The integers $K$ in \Cref{algorithm:constrained:paired} and $K_1$ in \Cref{algorithm:constrained:unpaired} represent the  number of field observations. By contrast, the other quantities are at the analyst's full discretion. In each SCP scenario, the number of prior samples $J$ used in the importance sampling step is independent of the observation count, and for the unpaired case, the same is true for the numbers $K_2$ of independent draws from $\TGDU$, $K_3$ of Monte Carlo samples, and $K_4$ of pairs from the SK estimate. The main limiting factor is the computational burden of evaluating $Q$ and drawing samples from a prior. For example, in the SK step, we must evaluate the cost function at $M_1M_2$ points $\{(q_r^{(c)}, u_s^{(c)})\}_{r,s}$, each of which requires $K_3$ evaluations of $Q$ per point. If $Q$ is computationally expensive to evaluate we compute a surrogate model for the computer model itself \cite{gramacy2020surrogates}.  In the following examples, we generally choose $K$ and $K_1$ to be smaller than the other integers to be realistic about the amount of data available. We also demonstrate the use of a surrogate model in the heat equation example to enable larger choices of $K_3$, $K_4$, and $J$.

To visualize the SCP solutions, we partition the input space $\Lambda\times U$ into a relatively fine grid of hyper-rectangles, e.g., 50 bins per axis, and  compute $\SCPjoint$ on each hyper-rectangle. 
For a 2-dimensional $\Lambda$, if this tiling of $\Lambda\times U$ is  $\{A_{ijk}\}_{1\le i,j,k\le 50}$, we  apply the formula in the final step of \Cref{algorithm:constrained:unpaired} to  estimate $\{\SCPjoint(A_{ijk})\}$. Marginalizing this over $U$, we compute $\SCPmarg(A_{ij\,\star})\coloneq \sum_{k=1}^{50}\SCPjoint(A_{ijk})$, yielding a discrete probability mass function on $\Lambda$ approximating the marginal density of $\SCPjoint$. We  visualize this with a heatmap. To give comparison to the TGD, the plots include a subset of points from the TGD projected onto $\Lambda$. We note that these points are not used in the SCP algorithm.  When comparing the paired and unpaired case heatmaps, we will always use the same color scale.

In this 2-dimensional $\Lambda$ case, we also visualize the full SCP solution on $\Lambda\times U$. To do so, for each $1\le k\le 50$, we construct a heatmap visualizing the distribution of probabilities $\{P_{\Lambda\times U}(A_{ijk})\}_{1\le i,j\le 50}$. Then, we stack these heatmaps for each value of $k$ along the $U$ axis. This yields a discrete representation of the 3-dimensional density. As before, we  include a subset of points from the TGD. For 3-dimensional $\Lambda$, we plot the marginal SCP solutions for each pair of dimensions in $\Lambda$, and for the stacked heatmap visual, we marginalize out $U$ and visualize the solution over the 3-dimensional $\Lambda$ space.

For diagnostic purposes, we  compute and visualize  associated distributions on $\cD$ and $U$. For $U$, we plot the marginal distribution of $U$ from the unpaired and paired SCPs, the distribution of the $U$ marginal from the TGD, and the marginal $U$ distribution of samples from the SK solution. We compute the $U$ marginals of the SCP solutions by evaluating $\hat P_U(A_{*k})\coloneq\sum_{i,j}\SCPjoint(A_{ijk})$ for each $k$, i.e., marginalizing out the $\Lambda$ portion of the tiling.  The other marginals are computed as histogram estimates from samples drawn from the associated TGD or SK solution. All of these distributions should line up, indicating  the SK estimation was successful and the SCP solution is  constrained as desired. For $\cD$, we plot the pushforwards of both SCP solutions by $Q$, the pushforward of the TGD, and again the marginal $\cD$ distribution of samples from the SK solution. The three pushforwards are generated by resampling from the SCP solution heatmap or TGD, applying $Q$, and computing a histogram estimate. Again, each distribution should match. We also include two joint distributions of $\cD\times U$, one for the joint SK solution and one for the true pushforward $\tilde Q\TGDjoint$ (i.e., the distribution of paired data $\{(q_i,u_i)\}$), along with some actual points from these distributions.

\begin{exam}[Quadratic Model with Additive Noise]
\label{exam:euclidean:norm}

\begin{figure}[t]
    \centering
    \begin{minipage}{0.35\textwidth}
        \centering
        \includegraphics[width=\linewidth]{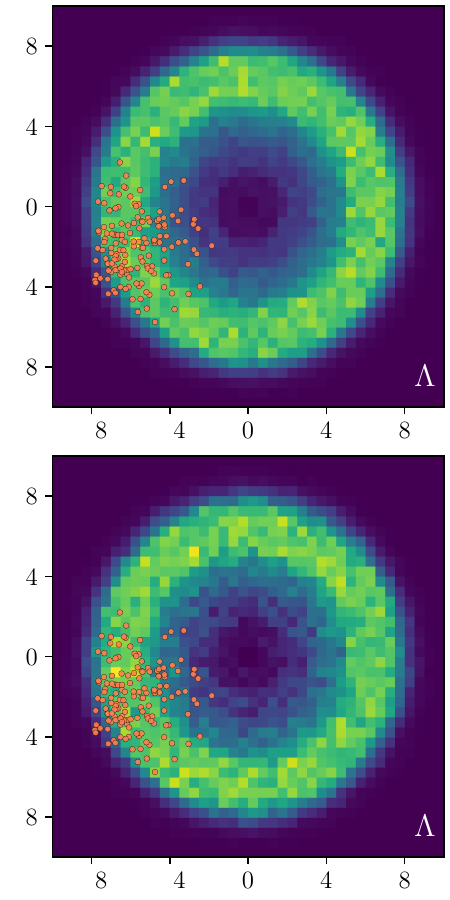}
    \end{minipage}\hfill
    \begin{minipage}{0.65\textwidth}
        \centering
        \includegraphics[width=\linewidth]{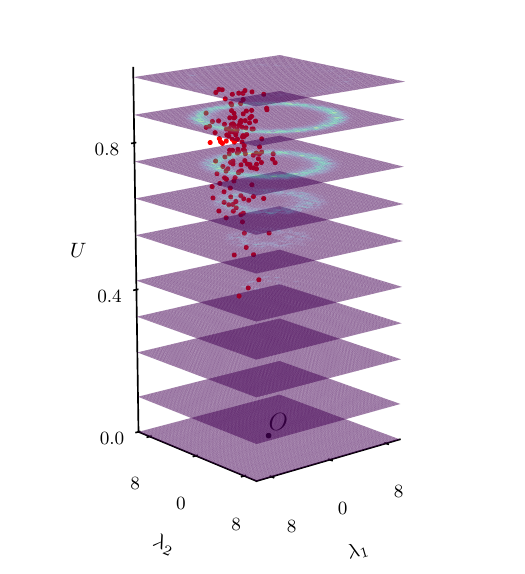}
    \end{minipage}
    \caption{Empirical estimates of constrained SCP solutions for \Cref{exam:euclidean:norm}. Lighter colors correspond to higher probabilities. The same color scale is used on the two left-plots. A scatterplot of points is included from the unobserved TGD; the same points appear in each plot.  Top left: Paired case, $\Lambda$-marginal. Bottom left: Unpaired case, $\Lambda$ marginal. Right: Joint estimate of the unpaired case's $\SCPjoint$.}
    \label{fig:euclidean:norm:sol}
\end{figure}

\begin{figure}[t]
    \centering
    \includegraphics[width=0.7\textwidth]{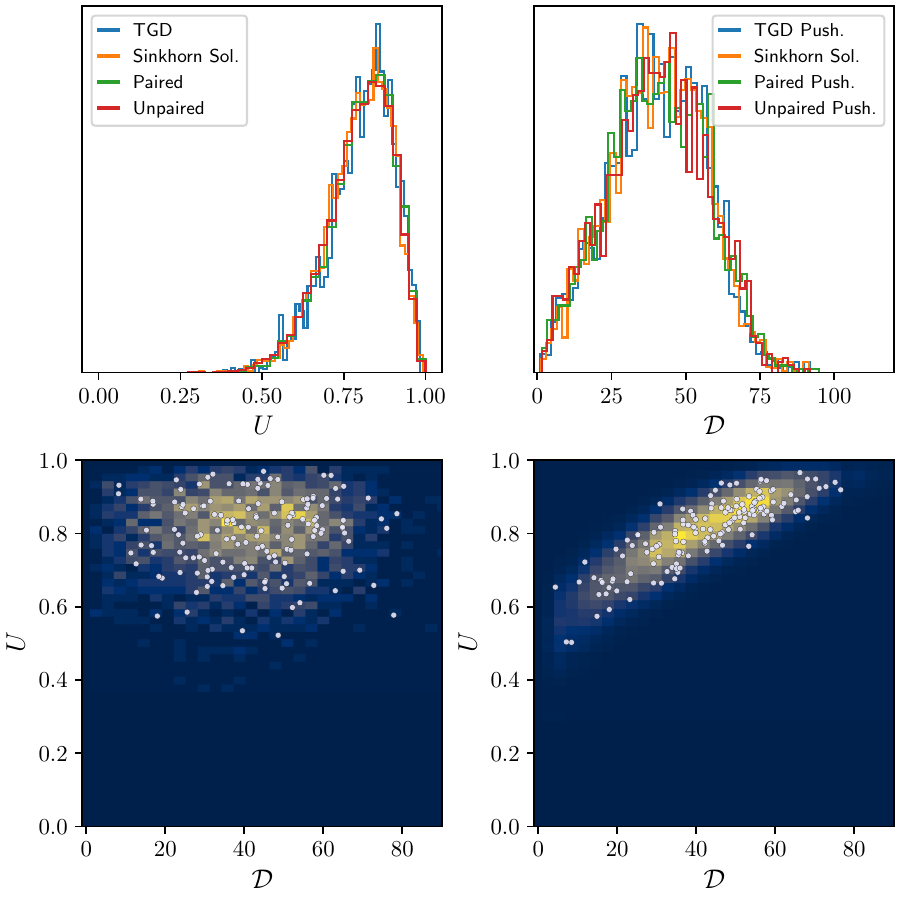}
    \caption{Associated $\cD$ and $U$ distributions from \Cref{exam:euclidean:norm}. Top left: a histogram plot of the marginal $U$ distribution of the TGD, SK solution, and both paired and unpaired constrained SCP solutions. Top right: histogram plots of the pushforward by $Q$ of the TGD and both paired and unpaired constrained SCP solution, as well as the $\cD$ marginal of the SK solution. Bottom left: a heatmap of an empirical estimate of the true joint distribution $\tilde Q \TGDjoint$ on $\cD\times U$ with some samples drawn from this distribution plotted as points. Bottom right: A heatmap of the SK estimate $\hat\Pi$ and some samples drawn from it plotted as points.}
    \label{fig:euclidean:norm:diagnostics}
\end{figure}

Set $\Lambda = [-10, 10]\times[-10,10]$ and $U=[0, 1]$. Set $Q(\lambda_1,\lambda_2,u)=\lambda_1^2+\lambda_2^2+u$. Let $\TGDjoint=\nu_1\times\nu_2\times \TGDU$, where $\nu_1\sim12\cdot\Beta(2, 8)-8$, $\nu_2\sim12\cdot\Beta(4, 4)-7$, and $\TGDU\sim\Beta(12, 3)$. In other words, $\TGDjoint$ is the independent coupling of three distinct Beta distributions scaled and translated to lie on $[-8, 4]\times[-8,4]\times[0, 1]$.  The prior distribution is chosen to be $P_p=\Unif[-10, 10]\times\Unif[-10,10]\times  \TGDU$, noting that the prior distribution does not take into account that the marginal TGD $\nu_1\times \nu_2$  concentrates on a strict subset $(-8, 4]\times(-8,4]$ of $\Lambda$.

For the paired case, in the notation of \Cref{algorithm:constrained:paired}, we use $K=\num{3,000}$ simulated observations, $M=30$ bins per axis of $\cD$ and $U$, and $J=\num{250,000}$ samples from the prior. For visualization of the SCP solution density, we partition each dimension of $\Lambda\times U$ into $40$ bins. For the unpaired case and in terms of \Cref{algorithm:constrained:unpaired}, we use $K_1=\num{3,000}$ observations, $K_2=\num{100,000}$ independent draws from $\TGDU$, $K_3=\num{15,000}$ Monte Carlo draws, $M_1=M_2=30$ bins on $\cD$ and $U$ for SK with $\varepsilon=1$, and $K_4=\num{100,000}$ samples from the SK solution $\hat\Pi$. When re-running the steps of the paired algorithm, we use $J=\num{250,000}$ prior samples, $M=30$ bins on $\cD$ and $U$, and visualize with $40$ bins per axis in $\Lambda\times U.$

We depict the solution in \Cref{fig:euclidean:norm:sol}. On the top-left, we visualize as a heatmap the marginal distribution of $\SCPjoint$ on $\Lambda$ for the paired data case, and the same for the unpaired data on the bottom left. The added points are a sample from the TGD, projected onto $\Lambda$ (the same points are used in either plot). As remarked earlier, the cost of having unpaired data is surprisingly minimal. On the right, we show a stacked heatmap representing the full distribution of $\SCPjoint$ as well as a sample of points. 

The shape of the solution follows from the geometry of $Q^{-1}(q)$ (a sphere) and the empirical estimator used in \Cref{algorithm:unconstrained} and its constrained extensions. The solution's density overweights regions of the input space that simultaneously have high prior weight and pass through the contours $\{Q^{-1}(q)\}$ (or the empirical analogue $\{Q^{-1}(I_i)\}$ from the numerical algorithm). We used a uniform prior on $\Lambda$, so all portions of the contours are given equal priority. Alternatively, if we chose a prior over $\Lambda$ that is concentrated  on the quadrant $[0,8]\times[0,8]$, then the annulus shape on the left-plots would be replaced by the subset  in that quadrant. We cannot expect the solution to solely concentrate near the latent TGD points because the algorithm only has access to samples in $\cD$, so it can only infer the fibers along which the input points might lie. Inspecting the plots on the right, we see that the bulk of the mass occurs for higher values of $U$, consistent with the $\Beta(12,3)$ marginal.

As a diagnostic check, we plot in \Cref{fig:euclidean:norm:diagnostics} associated marginal and joint distributions related to $\cD\times U$. All $U$ marginals and pushforwards in $\cD$ match as expected, and the SK solution has the correct marginals in $\cD$ and $U$ despite differing from the true joint.
\end{exam}

\begin{exam}[Thin-Plate Heat Equation]
\label{exam:thin:plate:eq}

\begin{figure}[t]
    \centering  
    \includegraphics[width=0.8\textwidth]{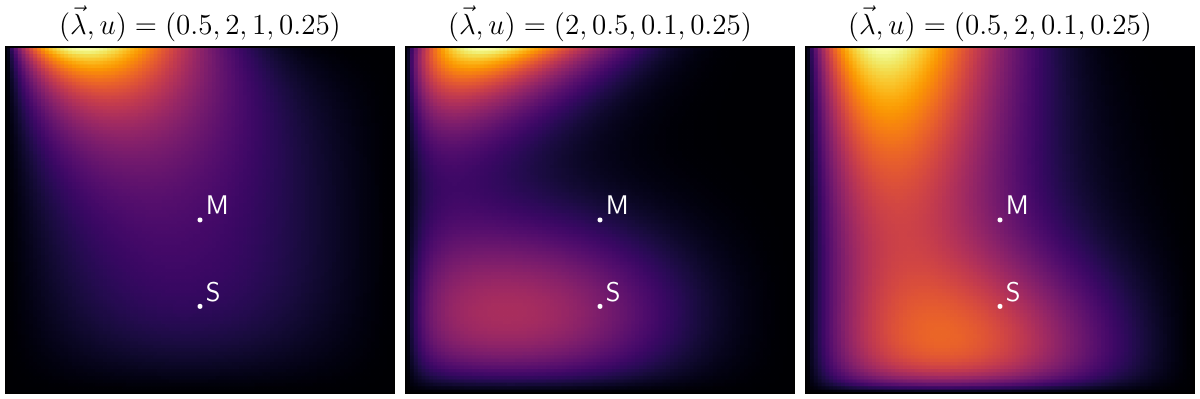}
    \caption{Steady state temperature profiles on a thin metal plate for three different combinations of convection and diffusion parameters and vertical position $u$ of the heat source, as defined in \Cref{exam:thin:plate:eq}. The point S denotes the center of the source term, and the point M the location where measurements are taken. The top edge is assumed to have a fixed temp}
    \label{fig:pde:examples}
\end{figure}
    We return to the heat transfer problem of \Cref{exam:thin:plate:eq}.
 We represent a 2-dimensional thin plate as the unit square $[0,1]\times[0,1]$. Assume convection coefficients of $\lambda_1,\lambda_2$ in the $x$ and $y$ directions, respectively, and a diffusion coefficient $\lambda_3$. Let \[\phi(x, y, u)=3\exp\left(-\frac{(x-0.5)^2}{0.1}-\frac{(y-u)^2}{0.05}\right)\] denote a heat source underneath the plate, which is positioned at $(0.5,u)$.  We let $T=T(x,y)=T(x, y;\lambda_1,\lambda_2,\lambda_3,u)$ denote the steady state temperature on the square $[0, 1]\times[0,1]$ for a given choice of metal parameters and vertical heat source position, where we  omit the parameters for brevity. Then $T$ solves the partial differential equation  \[-\lambda_3\Delta T + \begin{pmatrix}
     \lambda_1 \\ \lambda_2
 \end{pmatrix}\cdot \nabla T= -\lambda_3\left(\frac{\partial^2 T}{\partial x^2}+\frac{\partial^2 T}{\partial y^2}\right)+\lambda_1\frac{\partial T}{\partial x}+\lambda_2\frac{\partial T}{\partial y}=\phi(x,y,u),\] with Dirichlet boundary conditions,
    \begin{align*}
      T(x, 0) &= 0 \text{ for all }x\in[0, 1],\\
      T(0, y)&=T(1, y)=0 \text{ for all }y\in[0, 1], \\
      T(x, 1) &= \frac{3125}{256} \cdot x(1-x)^4 \text{ for all }x\in[0, 1].
    \end{align*} The bottom, left, and right edges are constrained to 0 while the top edge has a non-homogeneous, localized profile with maximum value 1, representing heat from where the plate is attached. We solve the PDE for each choice of parameters using the FiPy package in Python \cite{FiPy:2009}. We assume the temperature at $(0.5, 0.5)$ is recorded, which establishes the forward model $(\lambda_1,\lambda_2,\lambda_3,u)\mapsto T(0.5, 0.5;\lambda_1, \lambda_2,\lambda_3,u)$. We display in \Cref{fig:pde:examples} some  solutions for different choices of parameters. The parameters impact how the constant heat source at the point S as well as on the top edge spread to the rest of the plate.

\begin{figure}[t]
    \centering  
    \includegraphics[width=0.7\textwidth]{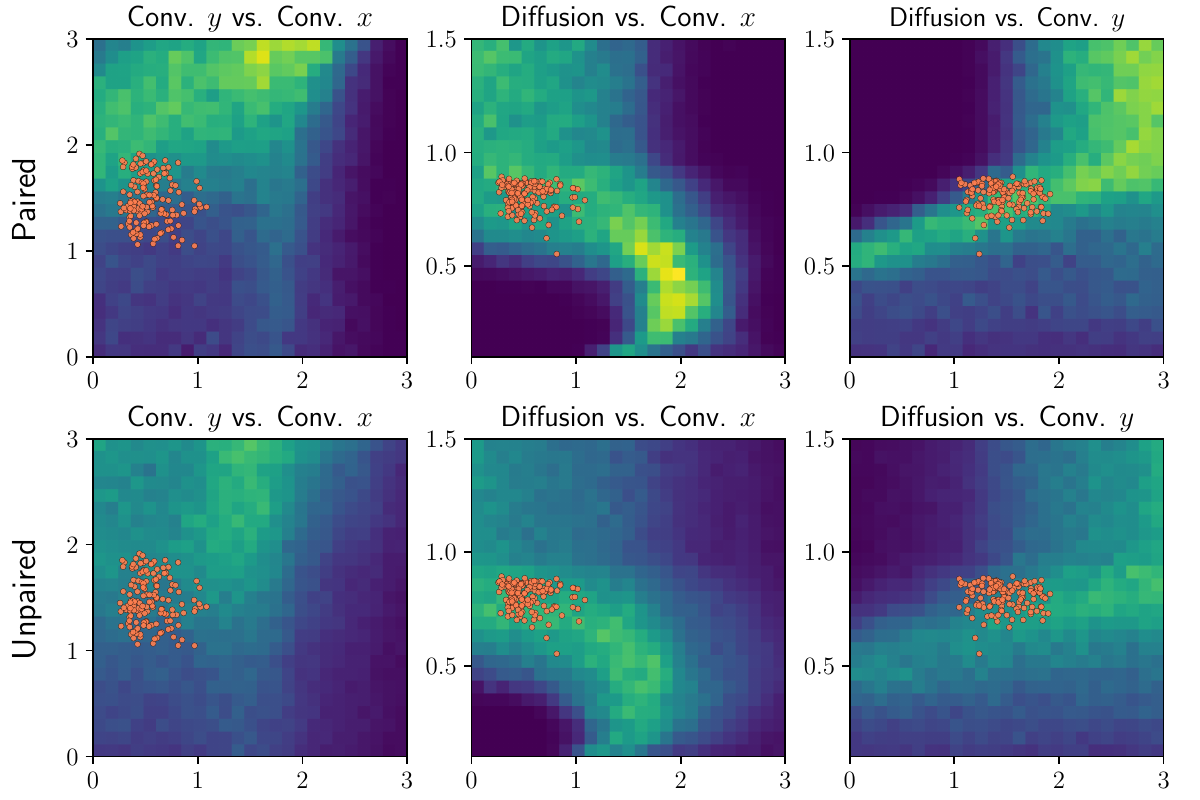}
    \caption{\Cref{exam:thin:plate:eq}'s marginal constrained SCP solutions for each pair of dimensions in $\Lambda$ in both the paired and unpaired case. We refer to $\lambda_1$ and $\lambda_2$ as `Conv. $x$' and `Conv. $y$', respectively. Projections of a sample of TGD points onto each respective subspace are shown. All plots use the same color scale.}
    \label{fig:heat:eq:marginals}
\end{figure}

\begin{figure}[t]
    \centering
    \includegraphics[width=0.7\textwidth]{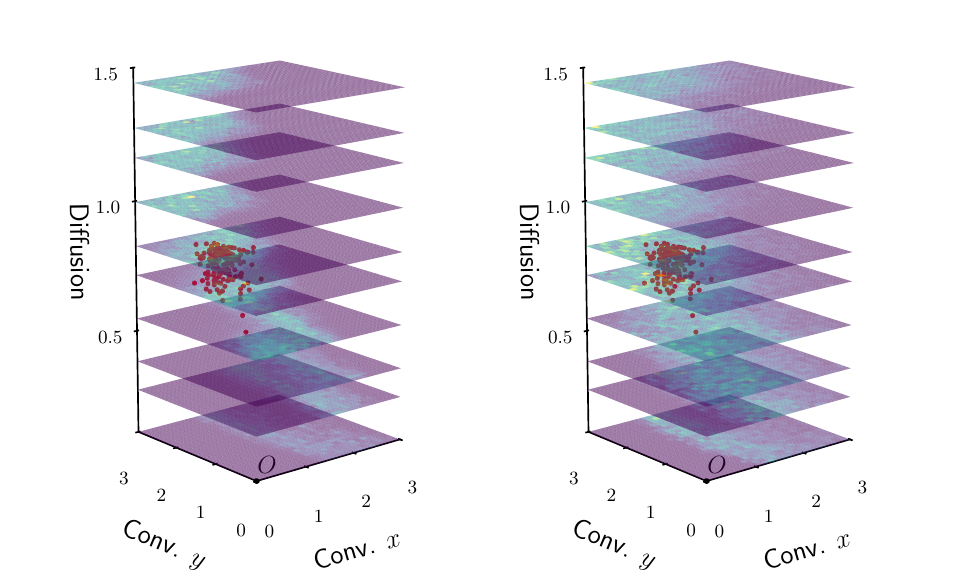}
    \caption{\Cref{exam:thin:plate:eq}'s marginal constrained SCP solutions over the three-dimensional $\Lambda$ in both the paired (left) and unpaired (right) cases. Projected TGD points onto $\Lambda$ are plotted as well.}
    \label{fig:heat:eq:long}
\end{figure}

\begin{figure}[t]
    \centering
    \includegraphics[width=0.7\textwidth]{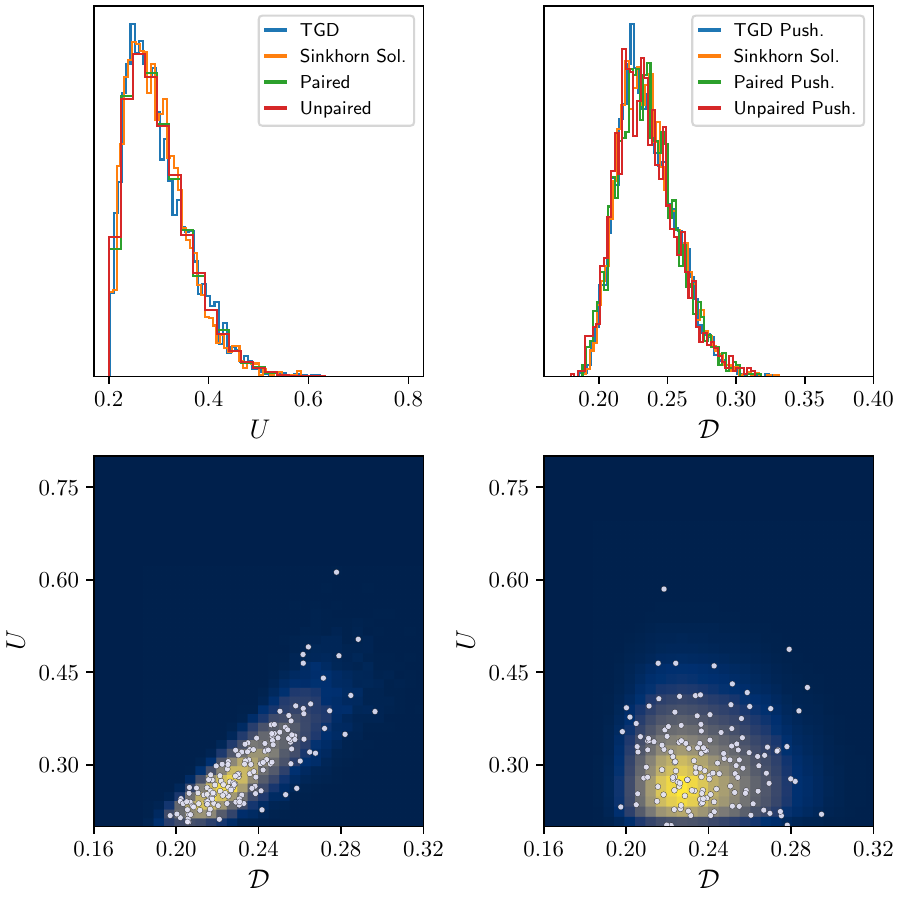}
    \caption{Associated $\cD$ and $U$ plots from \Cref{exam:thin:plate:eq}. Top left shows $U$ marginals of TGD, SK estimate, and both paired and unpaired SCP solutions. Top right shows $\cD$ marginal of SK solution, and the pushforward by $Q$ of the TGD and either SCP solutions. Bottom left shows the pushforward $\tilde Q \TGDjoint$ and the bottom right the SK estimate on $\cD\times U$.}
    \label{fig:heat:eq:diagnostics}
\end{figure}

    We  assume the convection and diffusion parameters are chosen from $\Lambda=[0, 3]\times[0,3] \times [0.1, 1.5]$ and the  vertical position of the source term $\phi$ from $U=[0.2, 0.8]$. For the $i$th-dimension of $\Lambda\times U$, of the form $[a_i, b_i]$, we assume the marginal TGD is a measure with distribution $(b_i-a_i)\Beta(\alpha_i, \beta_i)+a_i$, and we take $(\alpha_1,\beta_1)=(2, 6)$, $(\alpha_2,\beta_2)=(2, 2)$, $(\alpha_3,\beta_3)= (6, 2)$, and $(\alpha_4,\beta_4)=(2, 10)$. For example, the marginal distribution on $U$ is $(0.8-0.2)\cdot \Beta(2, 10)+0.2$, which is a $\Beta(2,10)$ distribution scaled to $[0.2, 0.8]$. The joint TGD is then the independent coupling of these shifted $\Beta$ distributions. It is assumed only that the marginal distribution on $U$ is known for the TGD. The prior distribution is as before the uniform distribution on $\Lambda$ coupled with $\TGDU$. Although \Cref{example:heat:equation:intro} discussed a Gaussian distribution on $U$, we use a $\Beta$ distribution for the simplicity of having a compact support inside $U$.
    
    Rather than directly computing   $Q(\lambda_1,\lambda_2,\lambda_3,u)=T(0.5, 0.5;\lambda_1,\lambda_2,\lambda_3,u)$ to solve the SCP, we train a surrogate model to speed up computations in the SK estimation and importance sampling steps. Importantly, we note that the initial data generation uses the results of the FiPy solver; it is only subsequent evaluations of $Q$ by the algorithm which use the surrogate model. To train the model, we draw a fine grid of points over $\Lambda\times U$, comprised of $\num{50,000}$ points drawn uniformly over this hyper-rectangle, and then $\num{10,000}$ points drawn from $\Beta(0.4, 0.4)$ distributions translated and scaled to each axis of $\Lambda\times U$ (to better sample the edges and corners). We perform a one-time evaluation of $T$ at position $(0.5,0.5)$ for each   choice of parameters and then train a gradient boosted regression model using the XGBoost library in Python \cite{XGBoost}. The map $Q$ is the predicted temperature of this regression model at a given parameter vector $(\lambda_1,\lambda_2,\lambda_3,u)$. The use of a surrogate yields a speed-up of 3--4 orders of magnitude over evaluating the FiPy PDE solver.

    Once $Q$ is trained, we solve the paired case with $K=\num{25,000}$ observations (from the non-surrogate model), $M=20$ bins over $\cD$ and $U$, and $J=\num{300,000}$ prior samples in \Cref{algorithm:constrained:paired}, and for visualization  of the posterior we use $25$ bins per dimension of $\Lambda\times U$. For the unpaired case using \Cref{algorithm:constrained:unpaired}, we set $K_1=K_2=\num{25,000}$, $M_1=M_2=30$, $K_3=\num{15,000}$, $K_4=\num{100,000}$, $\varepsilon=1$,  $J=\num{300,000}$, $M=20$, and $25$ bins per dimension of the input space. 

    In \Cref{fig:heat:eq:marginals}, we display the marginal SCP solutions for each pair of dimensions in $\Lambda$, along with the corresponding projection of latent TGD points.  The top row shows the paired case solutions while the bottom row the unpaired ones. We note that the entire support in each subspace is shown. Outside of the given heatmaps, the estimated probability is 0. We also show in \Cref{fig:heat:eq:long} the marginal SCP solutions over all of $\Lambda$, with just $U$ marginalized out. \Cref{fig:heat:eq:diagnostics} shows the analogous diagnostic checks as \Cref{fig:euclidean:norm:diagnostics}, and indeed, all solutions have the desired pushforward properties (by $Q$ and $\pi_U$).
    
    Generally, the contours in the marginal solution plots are more accentuated in the paired case, and for those involving diffusion, seem to pass through the projected points. This remained consistent for different versions of the experiment (e.g., using a smaller or larger $\Lambda$ or shifting the concentration of the TGD). Some more interesting behavior occurs with the convection-only plots, also consistent across experimental configurations. The main `ridges' seem to miss the projections of the points, if contours could be observed at all. This could be for several reasons. First, the variation in output for different convection parameters may be minimal, making estimation difficult. Second, the marginal of a probability distribution that is actually over 4-dimensions can be misleading alone; even in 2-dimensions, univariate marginals can obscure where the bulk of the mass is. Inspecting \Cref{fig:heat:eq:long}, albeit itself a marginal plot, suggests the points do lie within an oblique contour passing through the region.
    
    In practice, a much more informed prior could be used in such a physical example, using past experimental data about material properties as well as domain knowledge about the type of metal being heated---even if there is still  some stochastic variation. We gave equal weight to a relatively large range of input values, which can correspond to very distinct heat evolutions, as \Cref{fig:pde:examples} demonstrates.

    \end{exam}

\section{Discussion}
\label{section:discussion}

We have extended an important class of non-parametric statistical calibration algorithms for computer models that have stochastic control parameters, i.e., partially specified stochastic input parameters. A scientifically plausible solution should be consistent with the known marginal distribution for the control parameters. Even if a prior with the correct marginal is used, we demonstrate that a straightforward application of SCP solution methodology in \cite{esip_2025} need not preserve the known marginal.

We introduce a method that relies on augmenting the computer model  as well as the output space  by the input space component with a known distribution. We considered two cases, depending on if the field observations are paired with the control parameter data. This amounts to whether the pushforward of the input space by the augmented computer model is known or not---noting that the marginals of this pushforward are known regardless. When the pushforward is known, we could rely on existing SCP techniques with some modifications. In the more challenging case, we construct a substitute for the unknown joint pushforward that preserve the given marginals. 

The method in the unpaired case was based on optimal transport, with a cost functional defined by the computer model. We solve the optimal transport problem numerically using the Sinkhorn-Knopp algorithm. We found the solution is quite robust with respect to the Sinkhorn-Knopp regularization hyperparameter $\varepsilon$. This is surprising since the corresponding estimated pushforward changes substantially. Understanding why this robustness is the case warrants a deeper theoretical investigation. An additional avenue of investigation is the use of multi-marginal formulations of the Sinkhorn-Knopp algorithm, which could pose novel computational or methodological questions.

The heat equation example presents an interesting theoretical issue. Rather than using a PDE as the forward model, we trained a surrogate model using machine learning and used its prediction as the computer model. This meant the PDE computations for the forward model in the SCP solution algorithms only had to be run once, even if the algorithm is re-run with different specifications. In our case, computation was cheap enough that the surrogate model could be trained to essentially perfectly interpolate the computer model mapping. Scenarios with more expensive computer models will not have enough training data for such high interpolation accuracy. Consequently, the surrogate model now has additional statistical error impacting the SCP solutions. Additionally, using black-box surrogate models means potentially violating the regularity conditions of the computer model, and also diverging from physical reality in unpredictable ways. Future work is needed to investigate conditions under which using surrogate models permits the same theoretical guarantees for the SCP.

Lastly, although this paper treats stochastic control and calibration parameters, our methods and theory may yield insights for the deterministic case as well. Recent work established that the unconstrained SCP is continuous with respect to the trial-generating distribution \cite{prasadan2026continuity}. The same results, with some modifications, may also apply in this set-up, which could enable considering the limiting case where the distribution on the parameters converges to Dirac delta measures. This means our stochastic results have implications for classic calibration and parameter estimation problems.

\acknowledgements

A. Prasadan acknowledges the support of the Natural Sciences and Engineering Research Council of Canada.  S. Basu acknowledges the support of the Natural Sciences and Engineering Research Council of Canada. F. Yazdi acknowledges the support of the Natural Sciences and Engineering Research Council of Canada. D. Bingham acknowledges the support of the Natural Sciences and Engineering Research Council of Canada. D. Estep acknowledges the support of the Canada Research Chairs Program and the Natural Sciences and Engineering Research Council of Canada.

\appendix
\crefalias{section}{appendix}

\section{Regularity Conditions for the SCP}
\label{section:regularity:conditions}

The unconstrained SCP solution in \cite{esip_2025} imposes several regularity conditions that imply a well-defined solution, convergence of the importance sampler, a density for the solution, among others desirable features. We restate them below, using the notation and set-up of \Cref{section:unconstrained:SCP}, and then discuss the necessary adaptations for the constrained SCP.

\begin{enumerate}
    \item The set $\Lambda\subseteq\RR^n$ is compact, with measure $0$ boundary. The closure of $\mathrm{int}(\Lambda)$ is $\Lambda$. The set $\cD=Q(\Lambda)\subset\RR^m$ where $m\le n$.
    \item There is some open set $V_{\Lambda}$ containing $\Lambda$ on which $Q$ is a.e. continuously differentiable. 
    \item The  $m\times n$ Jacobian matrix  $J_Q$ of $Q$ has rank $m$ at all points in $V_{\Lambda}$ except on finitely many manifolds of dimension $\le n-1$.
    \item The prior $P_p$ has a density $\rho_p$ with respect to the Lebesgue measure and $Q\TGDmarg\ll QP_p$.
    \item The boundary $\partial \Lambda=\{\lambda: B(\lambda)=0\}$ where $B\colon V_{\Lambda}\to \RR$ is continuously differentiable, and $\partial\Lambda\cap Q^{-1}(q)=\{B(\lambda)=0\}\cap\{Q(\lambda)=q\}$.  Let $J_B$ denote the Jacobian matrix of $B$. If $q\in\cD$ is such that $\partial\Lambda\cap Q^{-1}(q)$ is non-empty, then  $(J_Q, J_B)^T$ has full rank on $\partial\Lambda\cap Q^{-1}(q)$.
    \item The pushforward $Q\TGDmarg$ has a density in Lebesgue measure on $\cD$.
\end{enumerate}

We now consider the setting where the input space is $\Lambda\times U$, $Q\colon\Lambda\times U\to\cD$, the prior measure $P_p$ is placed on $\Lambda\times U$, and the TGD is now $\TGDjoint$ with marginal $\TGDU$. If we solve the fully unconstrained SCP, rather than the proposed techniques, all of the assumptions are swapped out with their equivalent version in the product space. For example, Assumption 1 requires $m\le n+k$, and Assumption 3 that the $m\times(n+k)$ Jacobian matrix $J_Q$ of $Q$ must have full rank.

However, the proposed technique augments $Q$ to $\tilde Q\colon\Lambda\times U\to \cD\times U$ as in \Cref{eq:augmented:Q}, which leads to some subtle changes. For example, we get the condition $m+k\le n+k$, equivalent to $m\le n$, which is more restrictive than $m\le n+k$. Interestingly, the Jacobian conditions become less restrictive in the augmented setting. Recall that requiring $J_Q$ to be of full rank is equivalent to requiring $\mathrm{det}(J_Q J_Q^T)\ne 0$. For the augmented $\tilde Q$, observe that \[J_{\tilde Q}= \begin{bmatrix}
    D_{\lambda} Q & D_u Q \\ 0_{k\times n} & I_k
\end{bmatrix},\] where $D_{\lambda}Q$ and $D_u Q$ are the respective $m\times n$ and $m\times k$ matrices of derivatives over $\lambda$ and $u$. Consequently, an easy calculation shows that $\mathrm{det}(J_{\tilde Q} J_{\tilde Q}^T)\ne 0$ reduces to $\mathrm{det}((D_{\lambda}Q )(D_{\lambda}Q)^T)\ne 0$. In other words, we only have to check the $\lambda$ derivatives of $Q$ in contrast to the non-augmented, unconstrained SCP. The pushforward conditions in Assumptions 4 and 6 are now stated in terms of $\tilde Q$, and the pre-images of Assumption 5 are now of the form $\tilde Q^{-1}(q,u)$.

\section{Alternative Formulation of Optimal Transport Problem}
\label{section:measurability}

As discussed in \Cref{subsection:unpaired:data}, another approach to generating a surrogate of $\tilde Q \TGDjoint$ is to first solve
\[\hat \Pi_{\lambda} = \argmin_{\Gamma\in\mathfrak{F}}\left[\int_{\cD\times U}c_{\lambda}(q,u) d\Gamma(q, u) + \varepsilon\cdot \KL(\Gamma, P_{\cD}\times \TGDU)\right]\] where $c_{\lambda}(q,u)=|q-Q(\lambda, u)|^2$. Then, the final estimate $\hat \Pi$ is the average over the $\{\hat \Pi_{\lambda}\}$, with respect to some prior distribution on $\Lambda$. We must verify, however, that  $\lambda\mapsto \hat\Pi_{\Lambda}$ is a measurable mapping with respect to the appropriate measurable spaces. This turns out to be a simple consequence of a measurability result for nonlinear 
Kantorovich problems with parametrized cost functions due to \cite{Bogachev2022NonlinearKP}.

 Observe that any $\Gamma\in\mathfrak{F}$ satisfies $\Gamma(\cD\times U)=1$ by definition as a joint probability measure on $\cD\times U$. Moreover, the regularization term $\varepsilon\cdot\KL(\Gamma, P_{\cD}\times \TGDU)$ is constant in $q$ and $u$. Thus, we rewrite
\[\hat\Pi_{\lambda}= \argmin_{\Gamma\in\mathfrak{F}} \int_{\cD\times U}h_{\lambda}(q,v,\Gamma) d\Gamma(q,v),\] where \[h_{\lambda}(q,u,\Gamma) = c_{\lambda}(q,v) + \varepsilon\cdot\mathrm{KL}(\Gamma, P_\cD\times \TGDU).\]

 Next, let $\cP_r:=\cP_r(\cD\times U)$ be the set of Radon probability measures on $\cD\times U$. Equip $\cP_r$ with the weak topology induced by bounded and continuous functions. Let $\cB_{\cP_r}$ be the induced Borel $\sigma$-algebra.

\begin{theorem} There is a family of measures $\{\hat\Gamma_{\lambda}\}_{\lambda\in\Lambda}$ with each $\hat\Gamma_{\lambda}\in \hat\Pi_{\lambda}$ such that the mapping $\lambda\mapsto \hat\Gamma_{\lambda}$ is $(\cB_{\Lambda},\cB_{\cP_r})$-measurable.
\end{theorem}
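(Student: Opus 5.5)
The plan is to apply the measurable-selection theorem of \cite{Bogachev2022NonlinearKP} for nonlinear Kantorovich problems with parametrized costs. That result states, roughly: let $X,Y$ be compact (or completely regular) spaces, $T$ a measurable parameter space, and $h\colon T\times X\times Y\times\cP_r(X\times Y)\to\RR$ a cost that is Borel in the parameter and lower semicontinuous (or continuous) in $(x,y,\Gamma)$ for each fixed parameter. Suppose the marginals are fixed, or depend measurably on the parameter. Then the set of minimizers of $\Gamma\mapsto\int h_t(x,y,\Gamma)\,d\Gamma$ over the couplings is nonempty, and it admits a measurable selection $t\mapsto\Gamma_t$. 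Here $X=\cD$, $Y=U$, $T=\Lambda$, the marginals $P_{\cD}$ and $\TGDU$ do not depend on $\lambda$, and $h_\lambda(q,u,\Gamma)=c_\lambda(q,u)+\varepsilon\,\KL(\Gamma,P_{\cD}\times\TGDU)$, as rewritten above. So the proof reduces to checking that theorem's hypotheses.

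\textbf{Setup.} Both $\cD=Q(\Lambda\times U)$ and $U$ are compact metric spaces. Indeed, $\Lambda\times U$ is compact, and $Q$ is continuous under the regularity conditions of \ref{section:regularity:conditions}, which I will assume, as in the proof of \Cref{theorem:EOT:solution}. It follows that every Borel probability measure on $\cD\times U$ is Radon. Hence $\mathfrak{F}\subset\cP_r$, and $\mathfrak{F}$ is weakly compact by Prokhorov's theorem together with the closedness of the marginal constraints. It is also nonempty, since it contains $P_{\cD}\times\TGDU$.

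\textbf{Cost hypotheses.} The cost splits into two pieces.
\begin{enumerate}
\item The map $(\lambda,q,u)\mapsto|q-Q(\lambda,u)|^2$ is continuous on a compact set. It is therefore jointly Borel, and bounded.
\item The map $\Gamma\mapsto\KL(\Gamma,P_{\cD}\times\TGDU)$ is weakly lower semicontinuous and convex. This follows from the Donsker--Varadhan variational formula, which writes it as a supremum of continuous linear functionals. It takes values in $[0,\infty]$ and does not depend on $\lambda$.
\end{enumerate}
Consequently $h_\lambda$ is lower semicontinuous in $(q,u,\Gamma)$ for each $\lambda$, and Borel in $\lambda$. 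The functional $\Gamma\mapsto\int h_\lambda\,d\Gamma=\int c_\lambda\,d\Gamma+\varepsilon\KL(\Gamma,\cdot)$ is then lower semicontinuous on the compact set $\mathfrak{F}$. Its infimum is finite, because the product coupling has zero KL. Thus $\hat\Pi_\lambda$ is a nonempty set of minimizers, and by strict convexity of KL it is actually a singleton.

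\textbf{Conclusion and main obstacle.} Invoking \cite{Bogachev2022NonlinearKP} now gives a $(\cB_\Lambda,\cB_{\cP_r})$-measurable selection $\lambda\mapsto\hat\Gamma_\lambda\in\hat\Pi_\lambda$. The main obstacle is matching hypotheses exactly. Bogachev's cost may be required to be bounded, or continuous rather than merely lower semicontinuous in $\Gamma$, and KL can be $+\infty$. If so, I would fall back on a direct argument. The map $(\lambda,\Gamma)\mapsto\int c_\lambda\,d\Gamma$ is jointly continuous on $\Lambda\times\mathfrak{F}$, because $c_\lambda$ is uniformly continuous. So the full objective $F(\lambda,\Gamma)$ is jointly lower semicontinuous, which makes it a normal integrand on the compact metrizable space $\mathfrak{F}$. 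The argmin multifunction $\lambda\mapsto\hat\Pi_\lambda$ then has closed graph in $\Lambda\times\mathfrak{F}$ where it attains the infimum. Measurability of this multifunction follows from the measurable maximum theorem (Aliprantis--Border), or from Kuratowski--Ryll-Nardzewski. Since the minimizer is unique, this even yields Borel measurability of $\lambda\mapsto\hat\Pi_\lambda$ itself.
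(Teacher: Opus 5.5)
Your proposal is correct and follows the same route as the paper: you verify the hypotheses of \cite[Theorem 1]{Bogachev2022NonlinearKP} and then invoke it. These are complete separable (here compact) metric spaces, the $\lambda$-independent Borel marginals $P_{\cD}$ and $\TGDU$, a finite infimum, and lower semicontinuity of $h_\lambda$ on $K\times\mathfrak{F}$, which comes from continuity of $c_\lambda$ plus weak lower semicontinuity of $\KL$ (you get this from Donsker--Varadhan, where the paper cites Posner). Your extras (Borel dependence on $\lambda$, compactness of $\mathfrak{F}$, uniqueness of the minimizer by strict convexity, and the fallback argument) are sound but not needed; in the fallback, the closed-graph step holds because the $\KL$ term does not depend on $\lambda$, which makes the value function continuous and not merely lower semicontinuous.
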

\begin{proof} The spaces $\cD$ and $U$ are complete and separable metric spaces, and $\Lambda$ is a Souslin space.
    The constant maps $\lambda\mapsto P_{\cD}$ and $\lambda\mapsto \TGDU$ are Borel mappings.  For each $\lambda$, note \[\inf_{\Gamma\in\mathfrak{F}} \int_{\cD\times U} h_{\lambda}(q,u,\Gamma) d\Gamma(q,u)<\infty.\] 

    Now pick a compact $K\subset \cD\times U$ and any $\lambda\in\Lambda$. We must show the mapping $(x,y,\Gamma)\mapsto h_{\lambda}(x,y,\Gamma)$ is lower semicontinuous on $K\times \mathfrak{F}(P_{\cD},\TGDU)$.  The  $c_{\lambda}$ term in $h_{\lambda}$ is clearly smooth in $(x,y)$ and constant in $\Gamma$, and the mapping $\Gamma\mapsto \KL(\Gamma, P_\cD\times \TGDU)$ is lower-semicontinuous by \cite[Theorem 1]{posner}. Therefore, their sum is lower-semicontinuous. The claimed family of measures $\{\hat\Gamma_{\lambda}\}
    _{\lambda\in\Lambda}$ then exists by \cite[Theorem 1]{Bogachev2022NonlinearKP}.
\end{proof}

\bibliographystyle{IJ4UQ_Bibliography_Style}

\bibliography{bibliography}

\end{document}